\newtheorem{proposition}{Proposition}
\newtheorem{remark}{Remark}
\begin{document}

\title{DC-Bias and Power Allocation in Cooperative VLC Networks for Joint Information and Energy Transfer \thanks{Part of this work is presented at the IEEE Global Communications Conference (IEEE Globecom, Abu Dhabi, UAE, Dec. 9-13, 2018) \cite{globecom2}.}}

\setlength{\columnsep}{0.21 in}

\author{Mohanad~Obeed,~\IEEEmembership{Student Member,~IEEE,\thanks{M. Obeed, A. M. Salhab, and S. A. Zummo are with King Fahd University of Petroleum \& Minerals (KFUPM), Dhahran, Eastern Province, Saudi Arabia (email: g201106250@kfupm.edu.sa, salhab@kfupm.edu.sa, zummo@kfupm.edu.sa).}}
        Hayssam~Dahrouj,~\IEEEmembership{Senior Member,~IEEE,\thanks{H. Dahrouj is with Effat university, Makkah Province, Saudi Arabia (email: hayssam.dahrouj@gmail.com).}}
        Anas~M.~Salhab,~\IEEEmembership{Senior Member,~IEEE,}
        Salam~A.~Zummo,~\IEEEmembership{Senior Member,~IEEE,} and
        Mohamed-Slim~Alouini,~\IEEEmembership{Fellow,~IEEE} \thanks{M.-S. Alouini is with King Abdullah University of Science and Technology (KAUST), Thuwal, Makkah Province, Saudi Arabia (email: slim.alouini@kaust.edu.sa)}}
\maketitle

\begin{abstract}
Visible light communications (VLC) have emerged as strong candidates for meeting the escalating demand for high data rates. Consider a VLC network, where multiple access-points (APs) serve both energy-harvesting users (EHUs), i.e., users which harvest energy from light intensity, and information-users (IUs), i.e., users which gather data information. In order to jointly balance the achievable sum-rate at the IUs and the energy harvested by the EHUs, the paper considers maximizing a network-wide utility, which consists of a weighted-sum of the IUs sum-rate and the EHUs harvested-energy, subject to individual IU rate constraint, individual EHU harvested-energy constraint, and AP power constraints, so as to jointly determine the direct current (DC)-bias value at each AP, and the users’ powers. The paper solves such a difficult non-convex optimization problem using an iterative approach which relies on inner convex approximations, and compensates for the used approximations using proper outer-loop updates. The paper further considers solving the special cases of the problem, i.e., maximizing the sum-rate, and maximizing the total harvested-energy, both subject to the same constraints. Numerical results highlight the significant performance improvement of the proposed algorithms, and illustrate the impacts of the network parameters on the performance trade-off between the sum-rate and harvested-energy.
\end{abstract}
\begin{IEEEkeywords}
Visible light communication, energy harvesting, sum-rate, DC-bias, iterative algorithm.
\end{IEEEkeywords}
\IEEEpeerreviewmaketitle
\section{Introduction}

\subsection{Overview}
The need for high data rates and unregulated spectrum services has pushed the research community  to examine visible light communication (VLC)  techniques as a supplementary technology for indoor communication. This is especially the case because of the scarcity of the available radio-frequency (RF) spectrum due to ultra-dense network deployment. VLC technology uses the visible portion of the electromagnetic spectrum that is completely untapped, safe, free, and provides a high potential bandwidth for wireless data transmission, while also rejecting the existing RF interference \cite{survey_m, elgala2011indoor}. VLC, further, provides  larger energy efficiency, lower battery consumption, and smaller latency as compared to RF-based networks\cite{surveyvlc}. VLC can be indeed safely used in sensitive environments such as chemical plants, aircraft, and hospitals \cite{miyakoshi2013cellular}.  In spite of the small coverage of the transmitters in VLC systems, an exhaustive reuse of frequency can be implemented, with a relatively small effect on the performance due to the manageable  co-channel interference \cite{haas2016lifi}.
 Despite the aforementioned advantages, VLC networks can still be subject to several performance degradation factors, such as limited coverage, non-line-of-sight (non-LoS), failure transmission, frequent handover, and inter-cell
interference.

 Another attractive VLC feature of valuable interest is its energy harvesting capabilities, which are best enabled through equipping the VLC receivers with solar panels, so as to directly convert the light intensity into current signals without the need for external power supply \cite{wang2015design}, and with up to 40\% conversion efficiency \cite{king2012solar}. In practical indoor environments, however, two different types of users can typically co-exist, i.e., information-users (IUs) (such as mobiles, laptops, or tablets) and energy-harvesting users (EHUs) (such as Internet-of-things (IoT) devices, sensors, or relays). While IUs are data-hungry devices with specific data rate constraints, EHUs aim at harvesting visible light energy, which is especially feasible in indoor applications such as smart buildings, health monitoring, and sensors devices' applications. This motivates us in this paper to evaluate the benefit of a particular VLC-based scheme which considers the coexistence of both IUs and EHUs, and addresses the problem of jointly optimizing and balancing the achievable sum-rate at the IUs and the total harvested energy by EHUs, by means of adjusting the DC-bias at the VLC access-points and the powers of the users' messages.

\subsection{Related Work}
One of this paper goals is to optimize the VLC systems performance, a topic which is extensively studied in the literature of VLC systems, either by supplementing the network with additional RF APs \cite{JOCN, dyn, globecom, load2017}, or by applying  VLC APs cooperation \cite{user_hanzo, TGCN, coop, chen2013joint}.

The work considered in this paper is particularly related to the problems of jointly maximizing IUs sum-rate and EHUs harvested energy. In fact, maximizing the achievable sum-rate is investigated widely in the literature of wireless networks \cite{qian2009mapel, dahrouj2012power}, and in VLC systems \cite{jiang2017joint, shen2016rate}. Due to LEDs limited coverage and failure to operate in non-LoS environments, maximizing the sum-rate in VLC systems is intrinsically different from the one considered in RF-based systems, as VLC systems often impose additional systems constraints, e.g., handover overhead \cite{wu2017joint}, blockage probability \cite{Shadow}, users distribution in the floor area \cite{dyn},  users' field-of-view (FoV) alignments \cite{survey_m}, and fractional-frequency reuse \cite{chen2015fractional}. For instance, one of the effective solutions for addressing VLC coverage and blockage issues is considered in \cite{chen2013joint}, which proposes deploying multiple cooperative distributed APs so to simultaneously serve multiple users, a scenario that is partially adopted in this current paper which considers simultaneously serving both IUs and EHUs.

From the perspective of VLC harvesting energy capabilities, VLC devices often harvest energy from the received light intensity \cite{solar,dual1,dual2,sum_rate, amr,on3, simul2}. For example, reference \cite{solar} verifies experimentally the light energy harvested at the mobile phone, when equipped with a commercial solar panel in indoor environments. The authors in \cite{solar} show that the devices that are directly exposed to the indoor light can be charged up to a satisfactory power level. Reference \cite{fakidis2016indoor} studies the concept of indoor optical wireless power transfer to solar cells during the darkness hours. By using 42 laser diodes, an optical power of $7.2$ Watts can be delivered to a $30$ meters distant solar panel.

In the context of simultaneous power and information transfer, \cite{dual1} and \cite{dual2} study a dual-hop hybrid VLC/RF communication system, as a means to reach out to the out-of-coverage user. References \cite{dual1, dual2} show that visible light can be used in the first hop to transfer both data information and energy to the relay which, in turn, forwards the data to the destination using the resulting harvested energy. References \cite{sum_rate,amr} maximize the sum-rate utility of a VLC system consisting of one AP and $K$ users, subject to individual QoS constraints. Reference \cite{sum_rate} assumes that user $k$ receives the information in their assigned time slot, and receives the power within the time slots that are assigned for other users.  Reference \cite{amr}, on the other hand, proposes solving the problem of allocating the optical intensity and time slots through adopting a loose upper bound on the individual required harvested energy.
The authors in \cite{on3} characterize the outage performance of a hybrid VLC-RF system, where the visible light is used in the downlink direction to transfer energy and data to the users, and then the users use the harvested energy to transmit RF signals in the uplink direction.

 All the above papers use the alternating current (AC) component for harvesting the energy at users, where the direct current (DC) component of the transmitted light is fixed and readily used to harvest energy.
Toward this direction, reference \cite{simul2} considers a network that enables simultaneous light-wave information and power transfer (SLIPT) and maximizes the harvested energy under QoS constraints, so as to determine the DC in a portion of time, given that DC-bias in the remaining time is fixed for the purpose of transferring energy only. Reference \cite{simul2}, however, is restricted to a single transmitter and a single receiver only. The authors in \cite{simul2}, further, do not impose any energy constraints on users operation.


\subsection{Contributions}
 Different from the aforementioned references, this paper considers a VLC network, where multiple APs cooperate to serve both EHUs (e.g. sensors or IoT devices), and IUs (e.g. laptops, mobile phones, etc.), so as to best capture the multi-diverse applications schemes expected in next generations of wireless networks. The paper then investigates the problem of balancing the achievable sum-rate at the IUs and the total harvested energy by the EHUs, by means of adjusting the DC-bias at APs and allocating users' powers.  For mathematical tractability, the paper adopts the zero-forcing (ZF) precoding approach to cancel intra-cell interference, similar to \cite{hanzo_haas, ICC}.

 To balance between the performance of the IUs and the EHUs, the paper formulates the optimization problem which maximizes a weighted sum of the IUs sum-rate and the EHUs total harvested energy, under QoS and illumination constraints. The performance of the system is a function of both the DC bias values allocated at each AP, and the powers assigned to the users' messages. One of the paper contributions is to solve such a difficult non-convex optimization problem using an iterative approach, which uses inner convex approximations of the objective and constraints. It then compensates for the approximations using proper outer-loop updates. The paper also proposes a simpler sub-optimal baseline approach, which provides a feasible, yet  simple, solution to the formulated problem based on equal DC-bias allocation. The paper further considers solving the two special cases of the original optimization, i.e., the problem of maximizing the IUs sum-rate, and the problem of maximizing the EHUs total harvested energy, both subject to the same constraints as above. Simulation results highlight the performance and the convergence of our proposed algorithms. They particularly suggest that appreciable harvested energy and sum-rate improvement can be reached by optimizing the DC-bias and messages' powers in VLC systems.

The reminder of this paper is organized as follows. The system model, VLC channel and the energy harvesting channel are presented in
Section II. In Section III, we formulate the problem and present the proposed algorithms that solve the formulated problem. We introduce and discuss simulation results in Section IV. The paper is then concluded in Section V.

\section{System and Channel Models}

\subsection{System Model}

\begin{figure}[!t]
\centering
\includegraphics[width=0.40\textwidth]{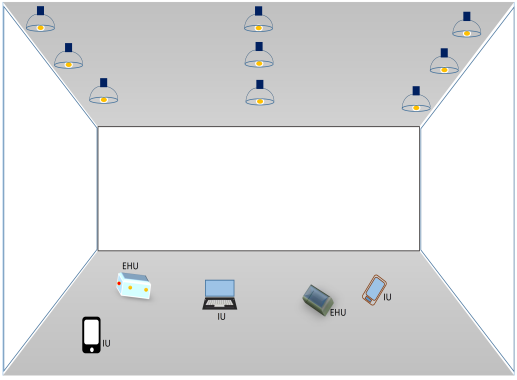}
\caption{System model (an example of user distribution when $N_{u,1}=3$, $N_{u,2}=2$, and $N_A=9$).}
\label{SM}
\end{figure}

Consider an indoor VLC system consisting of $N_A$ VLC access points (APs), which serve $N_u$ users in total. Among the $N_u$ users, $N_{u,1}$ users are IUs, and $N_{u,2}$ are EHUs, i.e., $N_u=N_{u,1}+N_{u,2}$. The paper considers the case where $N_{u,1}<N_A$, and adopts a zero-forcing (ZF) approach to cancel the intra-cell interference, so as to simplify the mathematical tractability of the problem. This assumption, i.e., $N_{u,1}<N_A$, emulates several indoor environments where the number of IUs are less than the number of lamps in the ceiling such as offices, labs, companies ,houses, etc.  Let $\mathbf{s}\in \mathbb{R}^{N_{u,1}\times 1}$ be the vector hosting the information of the $N_{u,1}$ users, and let $\mathbf{G}\in \mathbb{R}^{N_A\times N_{u,1}}$ be the precoding matrix associated with $\mathbf{s}$. The matrix $\mathbf{G}$ can be written as $\mathbf{G}=[\mathbf{g}^T_1;\ldots;\mathbf{g}_{N_A}^T],$ where $\mathbf{g}_i\in \mathcal{R}^{N_{u,1}\times 1}$ is the $i^{th}$ row of matrix $\mathbf{G}$ and the symbol $(.)^T$ means the transpose of the matrix. The electrical signal transmitted from AP $i$ can be written as $x_i=\mathbf{g}_i\bar{\mathbf{P}}\mathbf{s}$, where $\bar{\mathbf{P}}=diag([\sqrt{P_1},\sqrt{P_2},\ldots,\sqrt{P_{N_{u,1}}}])$,  where $P_j$ is the power allocated to the $j$th user's message. The DC-bias is denoted by $b_i$ and must be added to $x_i$ to avoid the resulting non-positive signals \cite{simul2}. The electrical signal, afterwards, modulates the optical intensity of the light-emitting diodes (LEDs) at AP $i$. The transmitted signal at AP $i$ can, therefore, be written as:
\begin{equation}
\label{transmit_signal}
y_{t,i}=P_{opt}(b_i+x_i),
\end{equation}
where $P_{opt}$ is the LEDs power at each AP. Let $I_L$ and $I_H$ be the minimum and the maximum input bias currents, respectively, i.e., $b_i \in [I_L,I_H]$. To guarantee that the output optical power is a linear function of the input current,  the transmitter LED must be in its  linear region. To this end, the peak amplitude of the modulated signal $x_i$, denoted by $A_i$, must satisfy the following constraint:
\begin{equation}
\label{Ai1}
A_i\leq \min(b_i-I_L,I_H-b_i).
\end{equation}
Constraint (\ref{Ai1}) implies that $A_i$ must satisfy two constraints, which are $A_i+b_i\leq I_H$ and $b_i-A_i\geq I_L$, to guarantee that the input electrical current to the LED is within the range of the linear region LED operation.
\subsection{Channel Model}
The paper adopts a line-of-sight (LoS) VLC channel model as in \cite{wir} and the first order reflected path for simplicity. More precisely, the LoS link between the $i$th LED and the $j$th user, denoted by $h_{i,j}^{LoS}$, can be written as follows:

\begin{equation}
\label{vlcch}
h_{i,j}^{LoS}=\frac{(m+1)A_{p}}{2\pi d_{i,j}^2} \cos^m(\phi)g_{of}f(\theta)\cos(\theta),
\end{equation}%
where $m=-1/\log_2(\cos(\theta_{\frac{1}{2}})$ is the Lambertian index, where $\theta_{\frac{1}{2}}$ is the half intensity radiation angle, $A_p$ is the photo-diode (PD) physical area, $d_{i,j}$ is the distance between the AP $i$ and the  user $j$, $g_{of}$ is  the optical filter gain, $\phi$ is the radiance angle, $\theta$ is the incidence angle, and $f(\theta)$  is the optical concentrator gain given by:
\begin{equation}
\label{FoVE}
f(\theta)= \begin{cases}\frac{n^2}{\sin^2(\Theta)}, & \theta\leq\Theta; \\
0, & \theta>\Theta,
\end{cases}
\end{equation}
where $\Theta$ is the semi-angle
of the user's FoV and $n$ is the refractive index.
The reference \cite{komine2004fundamental} shows that the DC  attenuation from the first reflected link is given by:

\begin{equation}
\begin{split}
h_{i,j}^{NLoS}= &\frac{(m+1)A_{p}}{2\pi d_{k,i}^2  d_{j,k}^2}. \rho. dA_s .\cos^m(\phi_r).cos(\alpha_1).cos(\alpha_2)\\
&.g_{of}f(\theta_r).\cos(\theta_r),
\end{split}
\end{equation}
where $\theta_r$ and $\phi_r$ are the incidence and irradiance angles of the first reflected link, respectively, where $d_{k,i}^2$ and $d_{j,k}^2$ are the distance between the $i$th AP and the $k$th reflecting point and the distance between the $k$th reflecting point and the $j$th user, respectively, where  $dA_s$ and $\rho$  are the reflective area  and the reflection factor, respectively, and where $\alpha_1$ and $\alpha_2$ are the the irradiance angles with respect to the reflecting point and with respect to the PD, respectively. Therefore, the equivalent VLC channel between $i$th AP  and $j$th user can be expressed as:
\begin{equation}
h_{i,j}=h_{i,j}^{LoS}+h_{i,j}^{NLoS}.
\end{equation}


After removing the DC-bias at the receiver side, the received signal vector at the users from all APs is given by:
\begin{equation}
\label{Yrr}
\mathbf{Y}_{r}=\rho \mathbf{H}\mathbf{G}\bar{\mathbf{P}} \mathbf{s}+\mathbf{n},
\end{equation}
where $\rho$ is the optical-to-electric conversion factor,  $\mathbf{H} \in \mathbb{R}^{N_{u}\times N_{A}}$ is the channel attenuation matrix that is assumed to be known at APs,  and $\mathbf{n}$ is the noise vector which includes both the thermal noise and the shot noise at the user, which can be modeled as zero-mean real-valued AWGN with variance $\sigma^2 = N_0W$, where $W$ is the modulation bandwidth, and $N_0$ is the noise power spectral density. The precoding matrix $\mathbf{G}$ is used to diagonalize the channel matrix, i.e.,  $\mathbf{G}=\mathbf{H}^T(\mathbf{H}\mathbf{H}^T)^{-1}$.

A tight lower bound on the network sum-rate at the $N_{u,1}$ IUs can then be written as \cite{tight}:

\begin{equation}
\label{sum_rate}
f_R(\mathbf{P})= \beta\sum_{j=1}^{N_{u,1}}\log\left(1+\frac{e(\rho^2)P_{j}}{2\pi WN_0}\right),
\end{equation}
where $\beta=W/2$ is a constant and $e$ is the constant exponential (Euler's number).
It is important to note that, from  (\ref{Yrr}), we can define the relation between the transmit power at AP $i$ and the assigned powers of the messages as:

\begin{equation}
\label{Pi_Pj}
p_i=\sum_{j=1}^{N_{u,1}}g_{i,j}^2P_j,
\end{equation}
where $g_{i,j}$ is the $(i,j)$th element of matrix $\mathbf{G}$. Based on  (\ref{transmit_signal}), the transmit power at the $i$th AP is further related to the peak amplitude of the modulated signal as follows:
\begin{equation}
\label{P_i}
p_i=(P_{opt}A_i)^2.
\end{equation}

\subsection{Energy Harvesting Signals}

For the EHUs, the DC component of the received signal is blocked by a capacitor and forwarded to the energy harvesting circuit \cite{simul2}. The harvested energy is given by \cite{solar}:

\begin{equation}
E=fI_{DC}V_{oc},
\end{equation}
where $f$ is the fill factor (typically around 0.75), and $I_{DC}$ is the received DC current which is given at the $j$th user by:
\begin{equation}
I_{DC}=\rho P_{opt} \mathbf{h}_{j}^T\mathbf{b},
\end{equation}
where $\mathbf{b}=[b_1, b_2,\ldots,b_{N_A}]$ is the DC-bias vector at APs, $\mathbf{h}_{j}$ is the channel vector from all APs to the user $j$, and
\begin{equation}
V_{oc}= V_t \ln(1+\frac{I_{DC}}{I_0}),
\end{equation}
where $V_t$ is the thermal voltage and $I_0$ is the dark saturation current of the PD. Hence, the harvested energy at user $k$ is given by:
\begin{equation}
\label{EH}
E_k(\mathbf{b})=f \rho P_{opt} V_t  \mathbf{h}_{k}^T\mathbf{b}\ln \left(1+\frac{\rho \mathbf{h}_{k}^TP_{opt}\mathbf{b}}{I_0}\right),
\end{equation}
and the total harvested energy at all $N_{u,2}$ users is given by

\begin{equation}
\label{THE}
f_E(\mathbf{b})=\sum_{k=1}^{Nu_2}E_k(\mathbf{b}).
\end{equation}

It is important to note that the DC-bias $b_i$ at the $i$th AP must be greater than or equal to $\frac{I_H+I_L}{2}$. This is because decreasing $b_i$ to be less than $\frac{I_H+I_L}{2}$ results in decreasing the harvested energy (\ref{EH}). It also decreases $A_i$ (based on (\ref{Ai1})), which decreases the transmit power $p_i$ that leads to a decrease in the sum-rate (\ref{sum_rate}). $b_i$, therefore, should satisfy: $b_i\geq \frac{I_H+I_L}{2}$, and $\min(I_H-b_i,b_i-I_L)$ becomes equal to $I_H-b_i$. Therefore, the relation  between $A_i$ and $b_i$ provided in (\ref{Ai1}) becomes:
\begin{equation}
\label{Aii}
A_i\leq I_H-b_i.
\end{equation}
In addition, if the optimal $b_i$ satisfies $b_i \leq I_H-A_i$, it can be increased to have $b_i=I_H-A_i$, which increases the objective function without violating the QoS constraints. Hence, the inequality in (\ref{Aii}) is satisfied with equality:
\begin{equation}
\label{Ai}
A_i= I_H-b_i.
\end{equation}

\section{Problem Formulation and Algorithms}

In order to jointly optimize the achievable sum-rate utility at the IUs and the total harvested energy utility at the EHUs, this section motivates and then considers maximizing a weighted sum of both utilities under QoS constraints and maximum transmit power constraint. The section then proposes two different solutions to solve the formulated non-convex problem by efficiently  adjusting the DC-bias vector and the users' powers. The section finally addresses the two individual optimization problems separately, i.e., maximizing the sum-rate utility, and then maximizing the total harvested energy utility, both under the same constraints.

\subsection{Weighted Sum Utility Maximization}
   The utility function for the IUs is the sum-rate that is given in (\ref{sum_rate}), which is a function of the messages' powers, while the utility function for the EHUs, given in (\ref{THE}), is the total harvested energy, which is a function of the APs DC-bias. Using (\ref{Pi_Pj}), (\ref{P_i}), and (\ref{Ai}), the relation between both variable vectors (the messages' powers and the DC-bias) can be expressed as follows:

 \begin{equation}
 \label{Pjbi}
 \sum_{j=1}^{N_{u,1}}g_{i,j}^2P_j=P_{opt}^2(I_H-b_i)^2, \ \ i=1,\ldots,N_A.
 \end{equation}
Expression (\ref{Pjbi}) shows that the relation between both vectors is not one-to-one. More specifically, a unique DC-bias vector $\mathbf{b}$ can be calculated for a given messages' power vector. The messages' power vector $\mathbf{P}$ might, however, have several solutions from a given DC-bias vector. Expression (\ref{Pjbi}) also shows that increasing the DC-biases  increases the total harvested energy at the EHUs, but decreases the data rate at the IUs (because the transmit power for the information signal at AP $i$ is given by $p_i=(P_{opt}A_i)^2$). Such conflicting impact of the DC-bias motivates the need for jointly optimizing both utilities by means of maximizing a weighted-sum under QoS and LEDs' linear operational region constraints. In this formulated problem, the weights of the utility functions can be controlled by a variable, called $\alpha \in [0,1]$. Mathematically, the considered optimization problem can be formulated as follows:

 \begin{subequations}
\label{EHM}
\begin{eqnarray}
&\displaystyle\max_{\mathbf{b}, \mathbf{P}}& \alpha f_R(\mathbf{P})+\frac{(1-\alpha)}{\omega} f_E(\mathbf{b})\\
\label{EHMb}
\nonumber
&s.t.&   \beta\log\left(1+\frac{e(\rho^2)P_j}{2\pi WN_0}\right)\geq R_{th,j}, \\&&\ \ \ \ \ j=1,\ldots,N_{u,1}\\
\label{EHMc}
\nonumber
&&   f \rho P_{opt} V_t \mathbf{h}_{k}^T\mathbf{b} \ln(1+\frac{\rho P_{opt} \mathbf{h}_{k}^T\mathbf{b}}{I_0})\geq E_{th,k},\\&&\ \ \ \  k=1,\ldots,N_{u,2}\\
\label{EHMd}
&& \sum_{j=1}^{N_{u,1}}g_{i,j}^2P_j= P_{opt}^2 (I_H-b_i)^2,\ \ i=1,\ldots,N_A,\\
\label{EHMe}
&&\frac{I_H+I_L}{2}\leq b_i\leq I_H, \ \ i=1,\ldots,N_A,
\end{eqnarray}
\end{subequations}
 where $R_{th,j}$ and $E_{th,k}$ are the minimum required data rate at the $j$th user and the minimum required energy to be harvested by the $k$th user, respectively, and  $\omega$ is a constant chosen to numerically equalize the order of magnitudes of the functions $f_R(\mathbf{P})$ and $\frac{1}{\omega} f_E(\mathbf{b})$. Constraints (\ref{EHMb}) and (\ref{EHMc}) are imposed to satisfy the minimum required fairness among IUs and the EHUs, while constraints in (\ref{EHMd}) and (\ref{EHMe}) are imposed to guarantee that the LEDs are operating in their linear region. It is important to note that problem (\ref{EHM}) solves three types of problems: 1) maximizing the sum-rate, which is achieved when we set $\alpha=1$, 2) maximizing the total harvested energy, which can be achieved by setting $\alpha=0$, 3) and maximizing a weighted sum of both utility functions for any $\alpha \in (0,1)$.


Problem (\ref{EHM}) cannot be easily solved, since the objective function and the constraint (\ref{EHMc}) are not concave, resulting in a difficult non-convex optimization problem. Specifically, $f_R(\mathbf{P})$ is a concave function in terms of $\mathbf{P}$, while $f_E(\mathbf{b})$ is a convex function in terms of $\mathbf{b}$, which makes their weighted sum a non-concave objective function. This paper next solves problem (\ref{EHM}) by first reformulating the problem in a more compact form, and then by proposing an numerical iterative approach.

The main idea of the proposed approach is that the problem is first formulated in terms of the messages' power vector $\mathbf{P}$ only, using the relation given in (\ref{Pjbi}). The paper then proposes a heuristic, yet efficient, algorithm to solve the reformulated problem through considering an approximated convex version of the problem, and then by correcting for the approximation in an outer loop update. For the sake of comparison, the paper further proposes a simple baseline approach, which guarantees a feasible solution to (\ref{EHM}).

\subsection{Problem Reformulation}

As discussed earlier, a unique DC-bias vector $\mathbf{b}$ can be calculated for a given messages' power vector. Thus, to reformulate problem (\ref{EHM}) in a more compact fashion, we choose to formulate the objective function and constraints of problem (\ref{EHM}) in terms of the vector $\mathbf{P}$ only. Using the relation in (\ref{Pjbi}), the DC-bias vector can be expressed as:

\begin{equation}
\label{P2b}
\mathbf{b}=I_H \mathbf{1}_{N_A}-\frac{1}{P_{opt}}\sqrt{\mathbf{\bar{G}P}},
\end{equation}
where $\mathbf{1}_{N_A}$ is the vector of length $N_A$ with all entries set to 1, where the matrix $\bar{\mathbf{G}}$ is defined as $\bar{\mathbf{G}}=[\bar{\mathbf{g}}_1^T; \bar{\mathbf{g}}_2^T;\ldots;\bar{\mathbf{g}}_{N_A}^T]$, with $ \bar{\mathbf{g}}_i^T=[g_{i,1}^2,\ g_{i,2}^2, \ldots, g_{i,N_{u,1}}^2]$, and where the square root denotes the componentwise square root of the vector argument.

Plugging  (\ref{P2b}) in the energy harvesting functions (\ref{EH}) and (\ref{THE}), we get:
\begin{equation}
\label{EH2}
\begin{split}
E_k(\mathbf{P})=f \rho P_{opt} V_t  \mathbf{h}_{k}^T(I_H \mathbf{1}_{N_A}-\frac{1}{P_{opt}}\sqrt{\mathbf{\bar{G}P}})\ln (   1+\\
\frac{\rho P_{opt} \mathbf{h}_{k}^T(I_H \mathbf{1}_{N_A}-\frac{1}{P_{opt}}\sqrt{\mathbf{\bar{G}P}})}{I_0}),
\end{split}
\end{equation}
and
\begin{equation}
\label{THE2}
f_E(\mathbf{P})=\sum_{k=1}^{N_{u,2}}E_k(\mathbf{P}).
\end{equation}
Using (\ref{P2b}), the constraints in (\ref{EHMe}) can be rewritten as:
\begin{equation}
\label{bcons}
0 \leq \mathbf{\bar{g}}_i^T\mathbf{P}\leq P_{opt}^2(\frac{I_H-I_L}{2})^2,  \ \ i=1,\ldots,N_A.
\end{equation}

Substituting  (\ref{EH2}), (\ref{THE2}), and (\ref{bcons}) in the optimization problem (\ref{EHM}), the problem can then be formulated in terms of the messages' power vector as follows:
 \begin{subequations}
\label{EHM2}
\begin{eqnarray}
&\displaystyle\max_{\mathbf{P}}& \alpha f_R(\mathbf{P})+\frac{(1-\alpha)}{\omega} f_E(\mathbf{P})\\
\label{EHM2b}
&s.t.&   P_j\geq P_{j,min}, \ \ \ \ \ j=1,\ldots,N_{u,1}\\
\label{EHM2c}
&&   E_k(\mathbf{P})\geq E_{th,k},\ \ \ \  k=1,\ldots,N_{u,2}\\
\label{EHM2d}
&&\mathbf{\bar{g}}_i^T\mathbf{P}\geq 0,  \ \ i=1,\ldots,N_A,\\
\label{EHM2e}
&& \mathbf{\bar{g}}_i^T\mathbf{P}\leq P_{opt}^2(\frac{I_H-I_L}{2})^2,\  \resizebox{0.11\textwidth}{!}{$i=1,\ldots,N_A,$}
\end{eqnarray}
\end{subequations}
where $P_{j,min}= \frac{(2^{\frac{R_{th,j}}{\beta}}-1)2\pi WN_0}{P_{opt}^2 e\rho^2}$.
Because functions $f_E(\mathbf{P})$ and $E_k(\mathbf{P})$ are not concave, the problem in (\ref{EHM2}) is still a non-convex optimization problem. Hence, we next propose a novel method that solves problem (\ref{EHM2}) by using a proper convex approximation, and then by compensating for the approximation in the outer loop.

\subsection{Problem Convexification}
To convexify problem (\ref{EHM2}), we utilize a two-step iterative approach. At the first step,
we fix the DC-bias vector values for specific terms of the non-concave functions, so as to get rid of the square root and the logarithm function expression in the energy functions. After solving the problem, the second step substitutes the updated value of the DC-bias vector in the terms of the non-concave functions. More specifically, in the first step (and at the very first iteration), let $\mathbf{\hat{b}}=\frac{I_H+I_L}{2}\mathbf{1}_{N_A}$ (i.e., $\hat{b}_i=\frac{I_H+I_L}{2}$) be the initial DC-bias vector. Therefore, the relation in (\ref{Pjbi}) can be approximated as follows:

\begin{equation}
 \label{Pjbi_apr}
 \mathbf{\bar{G}}\mathbf{P} \cong P_{opt}^2(I_H\mathbf{1}_{N_A}-\mathbf{b})\odot(I_H\mathbf{1}_{N_A}-\mathbf{\hat{b}}),
 \end{equation}
 where the operator $\odot$ denotes the vector componentwise multiplication. The DC-bias vector can be approximated as follows:

 \begin{equation}
\label{P2b_apr}
\mathbf{b} \cong I_H \mathbf{1}_{N_A}-\frac{1}{P_{opt}}\mathbf{\bar{G}P}./(I_H\mathbf{1}_{N_A}-\mathbf{\hat{b}}) ,
\end{equation}
where $./$ means the vector componentwise division. Define $$\mathbf{G}_b=\frac{1}{P_{opt}}[\frac{1}{I_H-\hat{b}_1}\bar{\mathbf{g}}_1^T; \frac{1}{I_H-\hat{b}_2}\bar{\mathbf{g}}_2^T;\ldots;\frac{1}{I_H-\hat{b}_{N_A}}\bar{\mathbf{g}}_{N_A}^T],$$ we can re-write (\ref{P2b_apr}) as follows:
\begin{equation}
\label{P2b_apr2}
\mathbf{b}\cong I_H \mathbf{1}_{N_A}-\mathbf{G}_b\mathbf{P}.
\end{equation}

To further convexify the energy functions, define $z_k(\mathbf{\hat{b}})$ as $z_k(\mathbf{\hat{b}})= \ln(1+\frac{\rho P_{opt}\mathbf{h}_{k}^T\mathbf{\hat{b}}}{I_0})$, which is a constant that depends on $\mathbf{\hat{b}}$.

Problem (\ref{EHM2}) can now be readily approximated as a convex optimization problem. For the completeness of presentation, define the following variables (which are all functions of the estimated DC-bias vector $\mathbf{\hat{b}}$):\\
$\mathbf{x}_k=f \rho P_{opt} V_t z_k(\mathbf{\hat{b}}) \mathbf{h}_{k},\ k=1,\ldots,N_{u,2}$,\\
 $x=\sum_{k=1}^{N_{u,2}}\mathbf{x}_k^T(I_H\mathbf{1}_{NA})$,\\
 $\mathbf{w}=\sum_{k=1}^{N_{u,2}}\mathbf{x}_k^T\mathbf{G}_b$, $m_k=I_H \mathbf{x}_k\mathbf{1}_{N_A}-E_{th,k}$, and \\
 $\mathbf{w}_k=\mathbf{x}_k^T\mathbf{G}_b, \ k=1,\ldots,N_{u,2}$.

Using the above notations, problem (\ref{EHM2}) can be approximated as follows:

\begin{subequations}
\label{EHM3}
\begin{eqnarray}
&\displaystyle\max_{\mathbf{P}}& \alpha f_R(\mathbf{P})+\frac{(1-\alpha)}{\omega}(x-\mathbf{w}^T\mathbf{P})\\
\label{EHM3b}
&s.t.&   P_j\geq P_{j,min}, \ \ \ \ \ j=1,\ldots,N_{u,1}\\
\label{EHM3c}
&&   \mathbf{w}_k^T\mathbf{P}\leq m_k \ \ \ \  k=1,\ldots,N_{u,2}\\
\label{EHM3d}
&&\mathbf{\bar{g}}_i^T\mathbf{P}\geq 0,  \ \ i=1,\ldots,N_A,\\
\label{EHM3e}
&& \mathbf{\bar{g}}_i^T\mathbf{P}\leq P_{opt}^2(\frac{I_H-I_L}{2})^2,\ \resizebox{0.11\textwidth}{!}{$ i=1,\ldots,N_A,$}
\end{eqnarray}
\end{subequations}

Since the function $f_R(\mathbf{P})$ is  concave and the function $x-\mathbf{w}^T\mathbf{P}$ is linear, the objective function in (\ref{EHM3}) is concave. Furthermore, all the constraints in (\ref{EHM3}) are linear, which means that the optimization problem (\ref{EHM3}) is convex and, thus, can be solved using efficient algorithms \cite{Boyd}. We next characterize the optimal solution of problem (\ref{EHM3}) by deriving the first-order Karush-Kuhn-Tucker (KKT) conditions, which helps iteratively finding the primal and dual variables associated with problem (\ref{EHM3}).

\begin{proposition}
The solution of problem (\ref{EHM3}) is given by
\begin{equation}
\label{pow}
\begin{split}
P_{j}=\resizebox{.4\textwidth}{!}{$\frac{-\alpha \beta}{\ln(2)\left( -\frac{1}{\omega}(1-\alpha)\mathbf{w}(j)+\lambda_j-\sum_{k=1}^{N_{u,2}}\mu_k\mathbf{w}_k(j)-\sum_{i=1}^{N_A}d_i\mathbf{\bar{g}}_i(j)\right)}$}\\
-\frac{1}{\gamma},\  j=1,\ldots,N_{u,1},
\end{split}
\end{equation}
where $\gamma=\frac{e(\rho^2)P}{2\pi WN_0}$, where the variables $\lambda_j$, $\mu_k$, and $d_i$ are the dual variables associated with constraints (\ref{EHM3b}), (\ref{EHM3c}), and (\ref{EHM3e}), respectively, where $p_{max}= P_{opt}^2(\frac{I_H-I_L}{2})^2$ is the maximum transmit power, and where $\mathbf{w}(j)$ denotes to the $j$th element of the vector $\mathbf{w}$.
\end{proposition}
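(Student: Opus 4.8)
The plan is to derive the closed-form expression for $P_j$ directly from the stationarity (first-order) KKT condition of the convex problem (\ref{EHM3}), treating the dual variables as given. First I would form the Lagrangian of problem (\ref{EHM3}): since the objective is to maximize a concave function, I write
$\mathcal{L}(\mathbf{P},\boldsymbol{\lambda},\boldsymbol{\mu},\mathbf{d}) = \alpha f_R(\mathbf{P}) + \frac{1-\alpha}{\omega}\left(x - \mathbf{w}^T\mathbf{P}\right) + \sum_{j=1}^{N_{u,1}}\lambda_j\left(P_j - P_{j,min}\right) - \sum_{k=1}^{N_{u,2}}\mu_k\left(\mathbf{w}_k^T\mathbf{P} - m_k\right) - \sum_{i=1}^{N_A}d_i\left(\mathbf{\bar{g}}_i^T\mathbf{P} - p_{max}\right)$,
where $\lambda_j,\mu_k,d_i \ge 0$ are the multipliers attached to (\ref{EHM3b}), (\ref{EHM3c}), (\ref{EHM3e}) respectively. (The sign conventions here are chosen to match the statement; one can equally absorb the lower-bound constraint (\ref{EHM3d}) into the others since it is implied, or carry it along as well.)

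Next I would differentiate $\mathcal{L}$ with respect to a single component $P_j$ and set the derivative to zero. The only nonlinear term is $f_R(\mathbf{P}) = \beta\sum_{j=1}^{N_{u,1}}\log\!\left(1+\frac{e\rho^2 P_j}{2\pi W N_0}\right)$, so, recalling $\gamma=\frac{e\rho^2 P}{2\pi W N_0}$ from the statement and writing $\log$ as $\log_2$ (hence the $\ln 2$ factor), the partial derivative of $\alpha f_R$ is $\frac{\alpha\beta}{\ln(2)}\cdot\frac{\gamma}{1+\gamma P_j}$ — here I am reading the notation so that the per-user coefficient inside the log is $\gamma$ itself, which is why the reciprocal $\tfrac{1}{\gamma}$ appears in the final formula. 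The contributions of the remaining (linear) terms are the constants $-\frac{1-\alpha}{\omega}\mathbf{w}(j)$, $+\lambda_j$, $-\sum_k \mu_k \mathbf{w}_k(j)$, and $-\sum_i d_i \mathbf{\bar{g}}_i(j)$. Setting the sum to zero and solving the resulting linear equation for $P_j$ yields exactly (\ref{pow}): collect all the constant terms into the single quantity appearing in the denominator, invert, and subtract $\tfrac{1}{\gamma}$.

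The main obstacle — really the only delicate point — is bookkeeping: keeping the signs of the four dual contributions consistent with the orientation of each inequality in (\ref{EHM3}), and correctly identifying the constant $\gamma$ inside the logarithm so that the algebra of solving $\frac{\alpha\beta\gamma}{\ln(2)(1+\gamma P_j)} + (\text{constants}) = 0$ produces the stated "water-filling''-type expression with the additive $-\tfrac{1}{\gamma}$ term. Beyond that, I would note that problem (\ref{EHM3}) is convex with affine constraints (Slater's condition is assumed to hold at a feasible point), so the KKT conditions are necessary and sufficient; hence the stationarity equation together with primal feasibility, dual feasibility $\lambda_j,\mu_k,d_i\ge 0$, and complementary slackness characterizes the optimum, and (\ref{pow}) is the primal part of that characterization — with the understanding that the dual variables are themselves pinned down by the complementary-slackness/feasibility system, which is what the subsequent iterative scheme solves numerically. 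One should also remark that the right-hand side of (\ref{pow}) should be read with an implicit projection onto the feasible box (equivalently, the expression holds for the active configuration of constraints), since the raw stationarity solution is only valid where the corresponding multipliers take their KKT-consistent values.
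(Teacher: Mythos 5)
Your proposal is correct and takes essentially the same route as the paper's own proof: form the Lagrangian of the convex problem (\ref{EHM3}) with multipliers $\lambda_j$, $\mu_k$, $d_i$ (dropping the redundant constraints (\ref{EHM3d})), impose the first-order KKT stationarity condition $\partial/\partial P_j=0$, and solve the resulting equation for $P_j$ to obtain (\ref{pow}). The only differences are cosmetic (you use the maximization sign convention for the Lagrangian, whereas the paper minimizes the negated objective) plus some extra, harmless remarks on Slater's condition and complementary slackness.
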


\begin{proof}
The proof hinges upon the interpretation of the Lagrangian duality of problem (\ref{EHM3}). Observe first that constraints in (\ref{EHM3d}) are rather redundant, since all the elements in $\mathbf{\bar{g}}_i$ are positive, $\forall i=1,\ldots,N_A$, and since the values of the vector $\mathbf{P}$ are guaranteed to be positive by constraints (\ref{EHM3b}). The Lagrangian function of problem in (\ref{EHM3}) can, therefore, be expressed as follows:
\begin{equation}
\label{dRA}
\begin{split}
\zeta=-\alpha \beta \sum_{j=1}^{N_{u,1}}\log\left(1+\gamma P_j\right)- \frac{(1-\alpha)}{\omega}(x-\mathbf{w}^T\mathbf{P})\\
-\sum_{j=1}^{N_{u,1}}\lambda_j(P_j-P_{j,min})+\sum_{k=1}^{N_{u,2}}\mu_k(\mathbf{w}_k^T\mathbf{P}-m_k)\\
+\sum_{i=1}^{N_{A}}d_i(\mathbf{\bar{g}}_i^T\mathbf{P}-p_{max}),
\end{split}
\end{equation}
Based on first-order KKT conditions \cite{Boyd}, we have:
\begin{equation}
\label{zet}
\frac{\partial \zeta}{\partial P_{j}}=0,\ j=1,\ldots,N_{u,1}.
\end{equation}
Solving  (\ref{zet}), we obtain:
\begin{equation}
\label{gp}
\begin{split}
-\alpha \beta \frac{\gamma}{\ln(2)(1+\gamma P_j)}+\frac{1}{\omega}(1-\alpha)\mathbf{w}(j)-\lambda_j+\sum_{k=1}^{N_{u,2}}\mu_k\mathbf{w}_k(j)\\
+\sum_{i=1}^{N_A}d_i\mathbf{\bar{g}}_i(j)=0.
\end{split}
\end{equation}
Re-ordering (\ref{gp}) then gives (\ref{pow}), which completes the proof.
\end{proof}
The dual variables $\lambda_j$, $\mu_k$, and $d_i$  must be selected in such a way that the resulted allocated power vector achieves the associated constraints. For instance, the value of the dual variables $\lambda_j$ must be selected to achieve the $j$th constraint in (\ref{EHM3b}). $\lambda_j$ can in fact be found after substituting (\ref{pow}) in constraints (\ref{EHM3b}), which gives the following:
\begin{equation}
\label{lambda}
\begin{split}
\lambda_j\leq \frac{-\alpha \beta}{\ln(2)(P_{j,min}+\frac{1}{\gamma})}+\frac{1}{\omega}(1-\alpha)\mathbf{w}(j)+\sum_{k=1}^{N_{u,2}}\mu_k\mathbf{w}_k(j)\\
+\sum_{i=1}^{N_{A}}d_i\mathbf{\bar{g}}_i(j)).
\end{split}
\end{equation}

The other dual variables, i.e., $\mu_k$ and $d_i$, can be found by using the subgradient method. More specifically, for a fixed value of $P_j$ (i.e., using (\ref{pow}) based on preset dual variables values), the subgradient method iteratively updates the values of $\mu_k$ and $d_i$ as follows:
\begin{equation}
\label{mu_du}
\mu_{k}(n+1)=\mu_{k}(n)+\delta_{\mu}(\mathbf{w}_k^T\mathbf{P}-m_k),\ j=1,\ldots,N_{k},
\end{equation}
\begin{equation}
\label{d_du}
d_{i}(n+1)=d_{i}(n)+\delta_{d}(\mathbf{\bar{g}}_i^T\mathbf{P}-p_{max}),\ i=1,\ldots,N_{A},
\end{equation}
where $\delta_{\mu}$ and $\delta_{d}$ are steps sizes, that are used to guarantee the algorithmic convergence.

\subsection{Iterative Algorithm}

In this section, we present the overall algorithm which is proposed to solve the original optimization problem (\ref{EHM}). The algorithm compensates for the approximations made earlier while convexifying the optimization problem. Because the proposed solution of the reformulated problem iteratively updates the dual variables, the estimated DC-bias vector is also updated at each iteration, so as to reflect the newest update of the values of the dual variables. The steps of the proposed algorithm are summarized in Algorithm \ref{Algor1} description.

 \begin{algorithm}
 \caption{Find the vectors $\mathbf{b}$ and $\mathbf{P}$}
 \label{Algor1}
 \begin{enumerate}
 \item Find the initial estimated DC-bias vector by choosing $\mathbf{\hat{b}}=\frac{I_H+I_L}{2}\mathbf{1}_{N_A}$ and assign initial non-negative random values for the dual variables.
 \item Set $n=1$
 \item Find $P_j$ using (\ref{pow}) $\forall j=1,\ldots,N_{u,1}$, and the corresponding $\mathbf{b}$ using (\ref{P2b}).
 \item Update the estimated DC-bias vector and update the corresponding values of $\mathbf{x}_k$ and $\mathbf{w}_k$ $\forall k=1,\ldots,N_{u,2}$,  $\mathbf{w}$, and $x$.
 \item Update the dual variables, using (\ref{lambda}), (\ref{mu_du}), and (\ref{d_du}).
 \item if $\Vert\mathbf{b}-\mathbf{b}\Vert^2 < \epsilon$, break;
  \item  Increment $n$ and go to step 3).
 \end{enumerate}
 \end{algorithm}
\begin{remark} The main idea of Algorithm \ref{Algor1} is to update the dual variables along with the estimated DC-bias vector in each iteration by equating it with the resulted DC-bias vector from the previous iteration. This process continues until convergence. It is important to note that there is no unique values for the dual variables that can reach the optimal power. Such conclusion is due to the fact that the dual variables must be selected to achieve the corresponding constraints. Hence, in step 5) in Algorithm \ref{Algor1}, we can find the $\lambda 's$ using (\ref{lambda}) by replacing the inequality with equality, which helps achieving the corresponding constraints.
\end{remark}

\subsection{Baseline Algorithm}

For benchmarking purposes, we now propose a simple, yet feasible, solution to problem (\ref{EHM}). In this approach, for simplicity, the DC-bias values are assumed to be equal across all APs, i.e., $b_i=b$. Based on this assumption and within the bounds of the DC-bias values, we find the maximum and minimum DC-bias values that achieve the constraints in (\ref{EHM}). It can be noticed that the minimum feasible DC-bias value is the one that maximizes the sum-rate, while the maximum feasible DC-bias value is the one that maximizes the total harvested energy. Therefore, the idea of this approach is that, instead of weighting the utility functions, we weight the corresponding DC-bias values.  In other words, we linearly combine the minimum and the maximum DC-bias vectors based on the given $\alpha$ value. After obtaining the fixed DC-bias vector, we formulate a linear optimization problem to find the corresponding messages' power vector. If we scrutinize the constraints in (\ref{EHM}), we see that the value of the DC-bias $b$ must be increased if at least one of the constraints in (\ref{EHMc}) is violated, while it must be decreased if at least one of the constraints in (\ref{EHMb}) is violated. That means the constraints in (\ref{EHMc}) and the constraint $b\geq \frac{I_H+I_L}{2}$ specify the minimum DC-bias vector that achieves all the constraints. On the other hand, the constraints in (\ref{EHMb}) and the constraint $b\leq I_H$ specify the maximum DC-bias vector that achieves all the constraints. If the value reached while searching for the maximum DC bias value is found to be less than the value reached while searching for the minimum DC-bias value, the problem of finding equal DC-bias at all APs is then unfeasible.

To determine the minimum DC-bias vector, we solve all the equations in (\ref{EHMc}) under the assumption that all the values in the vector $\mathbf{b}$ are equal. For the $k$th user, we find a solution for $b_k$ from the following equation
\begin{equation}
\label{bk}
\begin{split}
b_k f \rho P_{opt} V_t  \mathbf{h}_{k}^T \mathbf{1}_{N_A} \ln(1+b_k\frac{\rho P_{opt} \mathbf{h}_{k}^T\mathbf{1}_{N_A}}{I_0})\geq E_{th,k}.\ \ \\
 k=1,\ldots,N_{u,2}
 \end{split}
\end{equation}
Equations (\ref{bk}) can be solved using any numerical methods such as Newton method. Define $\mathbf{\bar{b}} \in \mathbb{R}^{N_{u,2}\times 1}$ as the vector that hosts the solutions of equations (\ref{bk}), the minimum DC-bias vector can be given by:

\begin{equation}
\label{bmin}
\mathbf{b}_{min}=\max\left(\frac{I_H+I_L}{2}, \max(\mathbf{\bar{b}})\right)\mathbf{1}_{N_A}.
\end{equation}

To determine the maximum DC-bias vector, we solve all the equations in (\ref{P2b}) when $P_j=P_{j,min} ,j=1,\ldots,N_{u,1}$. Therefore, the maximum DC-bias vector is given by:
\begin{equation}
\label{bmax}
\mathbf{b}_{max}=\min\left(I_H, I_H -\frac{1}{P_{opt}}\sqrt{\max(\mathbf{\bar{G}}\mathbf{P}})\right)\mathbf{1}_{N_A}.
\end{equation}
Based on a predefined $\alpha$, the DC-bias solution of the baseline approach is given by:
\begin{equation}
\label{b_base}
\mathbf{b}=\alpha\mathbf{b}_{min}+(1-\alpha)\mathbf{b}_{max}.
\end{equation}
It can be seen that all the values in the solution vector $\mathbf{b}$ are equal.
Because there is more than one solution of the messages' power vector $\mathbf{P}$ for the given DC-bias vector, we formulate the following simple optimization problem to find an efficient power allocation

\begin{subequations}
\label{RM}
\begin{eqnarray}
&\displaystyle\max_{\mathbf{P}}&\sum_{j=1}^{N_{u,1}}\gamma P_j\\
\label{RMb}
&s.t.&   P_j\geq P_{j,min}, \ \ \ \ \ j=1,\ldots,N_{u,1}\\
\label{RMc}
&&\mathbf{\bar{G}}\mathbf{P}\leq P_{opt}^2(I_H\mathbf{1}_{N_A}-\mathbf{b})^2,  \ \resizebox{0.1\textwidth}{!}{$ i=1,\ldots,N_A,$}\\
\label{RMe}
&& \mathbf{P}\geq \mathbf{0}.
\end{eqnarray}
\end{subequations}
Note that the vector $\mathbf{b}$ in the constraints (\ref{RMc}) is given by (\ref{b_base}).
Problem (\ref{RM}) is a linear programming (LP) optimization problem and can be solved easily using the CVX solver \cite{cvx}. All the baseline approach procedures are summarized in Algorithm \ref{Algor_bas}.

 \begin{algorithm}
 \caption{Baseline approach to find the vectors $\mathbf{b}$ and $\mathbf{P}$}
 \label{Algor_bas}
 \begin{enumerate}
 \item Find $\mathbf{\bar{b}} \in \mathbb{R}^{N_{u,2}\times 1}$ by solving the $N_{u,2}$ equations in (\ref{bk}).
 \item Find $\mathbf{b}_{min}$ and $\mathbf{b}_{max}$ using (\ref{bmin}) and (\ref{bmax}), respectively, then find the solution DC-bias vector using (\ref{b_base}).
 \item Using the given DC-bias vector, find the vector $\mathbf{P}$ by solving the linear optimization problem (\ref{RM}) using CVX solver \cite{cvx}.
 \end{enumerate}
 \end{algorithm}

\subsection{Special cases}
In this section, we consider the two special cases of the weighted-sum formulated problem. In these cases, we focus on solving the problem that considers maximizing one of the two extreme utilities (i.e., either the total harvested energy or the sum-rate) under the same considered constraints.
\subsubsection{Maximizing the total harvested energy ($\alpha=0$)}

To maximize the total harvested energy instead of the weighted sum function, we set $\alpha$ to $0$ for both Algorithm \ref{Algor1} and the baseline approach. For the proposed Algorithm \ref{Algor1}, the problem is interestingly cast and approximated as the following linear optimization problem:

\begin{subequations}
\label{EHM4}
\begin{eqnarray}
&\displaystyle\max_{\mathbf{P}}&(x-\mathbf{w}^T\mathbf{P})\\
\label{EHM4b}
&s.t.&   P_j\geq P_{j,min}, \ \ \ \ \ j=1,\ldots,N_{u,1}\\
\label{EHM4c}
&&   \mathbf{w}_k^T\mathbf{P}\leq m_k \ \ \ \  k=1,\ldots,N_{u,2}\\
\label{EHM4d}
&&\mathbf{\bar{g}}_i^T\mathbf{P}\geq 0,  \ \ i=1,\ldots,N_A,\\
\label{EHM4e}
&& \mathbf{\bar{g}}_i^T\mathbf{P}\leq P_{opt}^2(\frac{I_H-I_L}{2})^2,\  i=1,\ldots,N_A.
\end{eqnarray}
\end{subequations}

 Problem (\ref{EHM4}) can be solved efficiently using the CVX solver \cite{cvx}, without the need for the use of the dual decomposition method and the subgradient method. The steps of solving problem (\ref{EHM4}) are given in Algorithm \ref{Algor2}.

 \begin{algorithm}[t]
 \caption{Find the vector $\mathbf{b}$ that maximizes the total harvested energy}
 \label{Algor2}
 \begin{enumerate}
 \item Find the estimated DC-bias vector by putting $\mathbf{\hat{b}}=\frac{I_H+I_L}{2}\mathbf{1}_{N_A}$.
 \item Solve problem (\ref{EHM4}) using CVX solver, with the given $\mathbf{\hat{b}}$, and find the solution $\mathbf{b}$ using (\ref{P2b}).
 \item if $\Vert\mathbf{\hat{b}}-\mathbf{b}\Vert^2 > \epsilon$ or the maximum iteration is not reached, update $\mathbf{\hat{b}}=\mathbf{b}$ and go to step 2.
 \end{enumerate}
 \end{algorithm}

 For the baseline approach, the underlying algorithm (equal DC-bias allocation) for solving problem (\ref{EHM4}) is given by:
\begin{equation}
\mathbf{b}=\min\left(I_H, I_H -\frac{1}{P_{opt}}\sqrt{\max(\mathbf{\bar{G}}\mathbf{P}_{min}})\right)\mathbf{1}_{N_A},
\end{equation}
which is the maximum feasible DC-bias that achieves the constraints while maximizing the total harvested energy. The messages' power herein are given by $\mathbf{P}=\mathbf{P}_{min}$.

\subsubsection{Maximizing the sum-rate ($\alpha=1$)}
The problem of maximizing the sum-rate under the established constraints can be obtained by setting $\alpha=1$. The problem can be approximated as (\ref{EHM3}) with setting $\alpha=1$, and Algorithm (\ref{Algor1}) can be applied to find the joint DC-bias and power vector that maximize the sum-rate function. Similarly, the power vector in the baseline approach for the sum-rate maximization can be obtained by solving (\ref{RM}), where the DC-bias vector is given by:
\begin{equation}
\label{43}
\mathbf{b}=\max\left(\frac{I_H+I_L}{2}, \max(\mathbf{\bar{b}})\right)\mathbf{1}_{N_A},
\end{equation}
which is the minimum equal DC-bias that achieves the constraints while maximizing the sum-rate.
\subsection{Computational Complexity}

In this section, we discuss the computational complexity of both the proposed algorithm and the baseline approach. It is shown in \cite{alam2013relay, Energy} that the complexity of the subgradient method is a polynomial function of the number of the dual variables, which is $M=N_{u,1}+N_{u,2}+N_A$. Besides, in each iteration, we need to update the estimated DC-bias and the corresponding variables $\mathbf{x}_k,\ k=1,\ldots,N_{u,2}$ and $\mathbf{w}_k,\ k=1,\ldots,N_{u,2}$. This means that for updating the DC-bias vector, the number of the updated variables in each iteration is $N_A\times N_{u,2}+N_A\times N_{u,1}$. Therefore, Algorithm \ref{Algor1} has a computational complexity in the order of $O(I_R(M+N_A\times N_{u,2}+N_A\times N_{u,1}))$, where $I_R$ is the number of iterations required for Algorithm \ref{Algor1} convergence.

On the other hand, the computational complexity of the proposed baseline approach is mainly due to solving a LP optimization problem, which is shown to be bounded by $O(n^2l)$, where $l$ is the number of constraints and $n=N_{u,1}$ is the number of variables \cite{Boyd}.

\begin{table}[t]
\centering
\caption{Simulation Parameters}
\label{table}
\begin{tabular}[t!]{l c}
\hline
  Parameter Name&Parameter Value\\
 \hline \\

  VLC AP maximum bandwidth, $W$ & $20$ MHz  \\

  The physical area of a PD for IUs, $A_{p}$ & $0.1$\ cm$^2$ \\
  The physical area of a PD for EHUs, $A_{p}$ & $0.04$\ m$^2$ \\
   Gain of optical filter, $g_{of}$ & $1$  \\
   Half-intensity radiation angle, $\theta_{1/2}$ & $60^o$\  \\
   FoV semi-angle of PD, $\Theta$  & $40^o-65^o$\\
   Optical-to-electric conversion factor, $\rho$& $0.53$ [A/W]\\
  Refractive index, $n$ & 1.5 \\
  Maximum input bias current, $I_H$ & $12$ mA  \\

  Minimum input bias current, $I_L$ & $0$ A  \\
  Fill factor, $f$ &0.75\\
  LEDs' power, $P_{opt}$ & 10 W/A\\

  Thermal voltage, $V_t$ & 25 mV \\

  Dark saturation current of the PD, $I_0$ & $10^{-10}$ A\\

  Noise power spectral density, $N_0$ & $10^{-22}$\ A$^2$/Hz  \\
  Room size & $8\times8$\\
  Room height &$3$ m\\
  User height & $0.85$\\
  Number of APs &$ 4\times4$\\
  Minimum IUs data rate, $R_{th,j},\ j=1,\ldots,N_{u,1}$ & $10$ (Mbits/sec)\\
  Minimum EHUs energy, $E_{th,k},\ k=1,\ldots,N_{u,2} $ & $1$ $\mu$Joule\\

  \hline
\end{tabular} \\

 \end{table}

\section{Simulations}
This section evaluates the performance of the proposed algorithms by illustrating how the weight $\alpha$, the number of users (either IUs or EHUs), and the FoV affect the total harvested  energy, sum-rate, and the weighted sum function. All the simulation results are implemented under the simulation parameters given in Table \ref{table}, similar to \cite{simul2, hanzo_haas}, and \cite{dual1}. Consider an $8 \times 8\times 3$ m$^3$ room equipped with 16 VLC APs that are at ceiling level, and serve several IUs and EHUs. Monte-Carlo simulations are used to assess the performance of the proposed algorithms, where every point in the numerical results is the average of implementing 100 different user realizations.
\begin{figure}[!ht]
\centering
\includegraphics[width=3.5in]{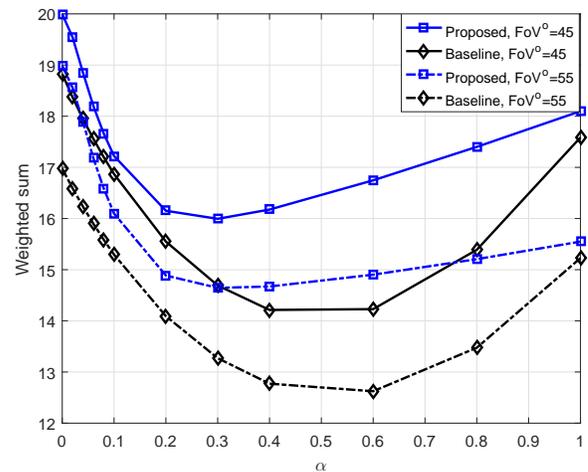}
\caption{Comparison between the proposed algorithm and the proposed baseline by plotting the weighted sum function versus the weight $\alpha$ for different users' FoV, $N_{u,1}=5,\ N_{u,2}=5,$ and $\omega=12*10^3$.}
\label{alpha_v_WS}
\end{figure}

Fig. \ref{alpha_v_WS} compares the proposed Algorithm \ref{Algor1} with the proposed baseline approach by plotting the weighted sum function versus $\alpha$. The figure shows that the proposed Algorithm \ref{Algor1} outperforms the proposed baseline approach for all different weights and different users' FoV. The figure further shows that the weighted sum function is maximized when $\alpha=0$ or $1$, i.e., when the weighted sum function is just the total harvested energy or the sum-rate function, respectively. Such performance behavior can be justified by the fact that when $\alpha$ is small  (i.e. when $\alpha\leq 0.3$), the dominating function is the total harvested energy, and hence the increase in $\alpha$ decreases the weighted sum, while when $\alpha$ is large ($\alpha \geq 0.4$) the dominating function is the sum-rate and, hence, the increase in $\alpha$ increases the weighted sum function.

\begin{figure}[!ht]
\centering
\includegraphics[width=3.5in]{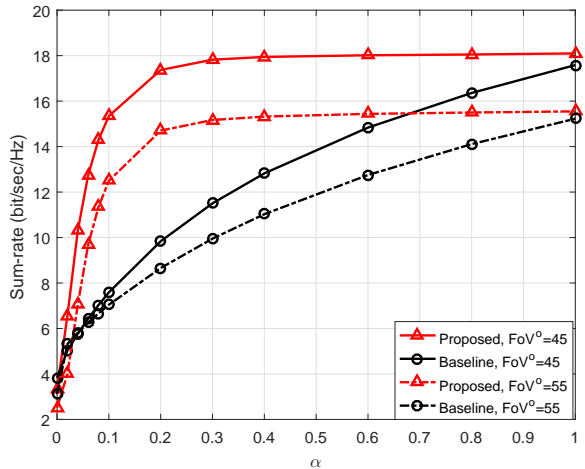}
\caption{The sum-rate function versus $\alpha$ for different users' FoV.}
\label{alpha_v_SR}
\end{figure}

To show how the weight $\alpha$ affects the sum-rate and the total harvested energy, we plot the sum-rate function versus  $\alpha$ in Fig. \ref{alpha_v_SR}, and the total harvested energy versus $\alpha$ in Fig. \ref{alpha_v_THE}. It can be seen from both figures that as the weight increases, the sum-rate increases and the total harvested energy decreases,  but with a decreased rate. The figures also show that for the different values of $\alpha$, as the sum-rate increases (as shown in Fig. \ref{alpha_v_SR}), the total harvested energy decreases (as shown in Fig. \ref{alpha_v_THE}). These results confirm that the sum-rate and the total harvested energy functions exhibit an opposite behavior, and can be controlled by allocating the DC-bias, since decreasing the DC-bias decreases the total harvested energy and preserves much power for transmitting data. Both figures further show that at some values of $\alpha$, if the proposed baseline approach outperforms the proposed Algorithm \ref{Algor1} at one utility function (either the sum-rate or the total harvested energy), it provides much less performance at the same points at the other utility function.

Figs. \ref{alpha_v_WS}, \ref{alpha_v_SR}, and \ref{alpha_v_THE} show that the performance of the utility functions is better at lower values of FOVs, i.e., the $45^o$ FOV case as compared to the $55^o$ FOV case.
\begin{figure}[!ht]
\centering
\includegraphics[width=3.5in]{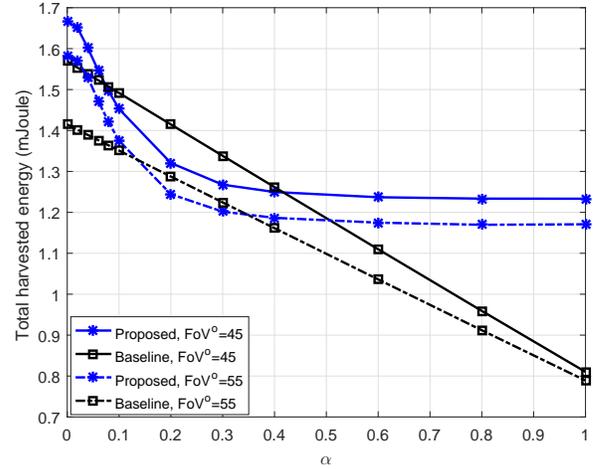}
\caption{The total harvested energy function versus $\alpha$ for different users' FoV.}
\label{alpha_v_THE}
\end{figure}
Such result is further illustrated in Fig. \ref{FoV_v_WS1} and \ref{FoV_v_WS2}. Both figures  plot the weighted sum versus the FoV of users, but with changing the different number of IUs at Fig. \ref{FoV_v_WS1}, and then changing the number of EHUs at Fig. \ref{FoV_v_WS2}. Both proposed approaches are examined in these figures.  As expected, as the users' FoV increases, the total harvested energy decreases. Equation ({\ref{FoVE}}) further justifies this fact, since if the FoV ($\Theta$) increases between $0^o$ and $90^o$, the channel quality decreases significantly. On the other hand, from the equation in ({\ref{FoVE}}), decreasing the user's FoV decreases the probability of the coverage at that user. As a result, we can conclude that if the users' FoV is adjustable, decreasing its value subject to having at least one VLC AP in the FoV of that user would indeed increase the network harvested energy. The figures also show that the proposed iterative Algorithm \ref{Algor1} outperforms the proposed baseline approach with all the different users' FoV and different number of IUs and EHUs.
\begin{figure}[!ht]
\centering
\includegraphics[width=3.5in]{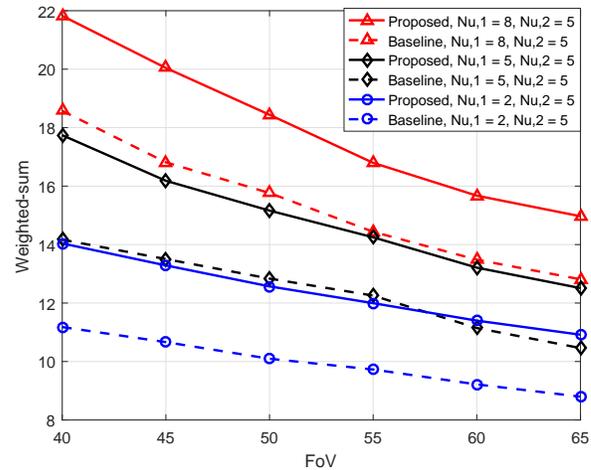}
\caption{The weighted sum function versus users' FoV with different number of IUs, $\alpha=0.5$.  }
\label{FoV_v_WS1}
\end{figure}

Fig. \ref{FoV_v_WS1} and \ref{FoV_v_WS2} further illustrate how the weighted sum function increases as the number of IUs or the number of EHUs increases, either by using Algorithm \ref{Algor1}, or by using the proposed baseline approach. The effect of increasing the EHUs is higher than the effect of increasing the number of IUs. This is because the rate of increasing the sum-rate decreases with increasing the IUs, while the rate of increasing the total harvested energy stay fixed as the number of EHUs increases, as long as the light intensity is distributed uniformally at the floor of the room.

\begin{figure}[!ht]
\centering
\includegraphics[width=3.5in]{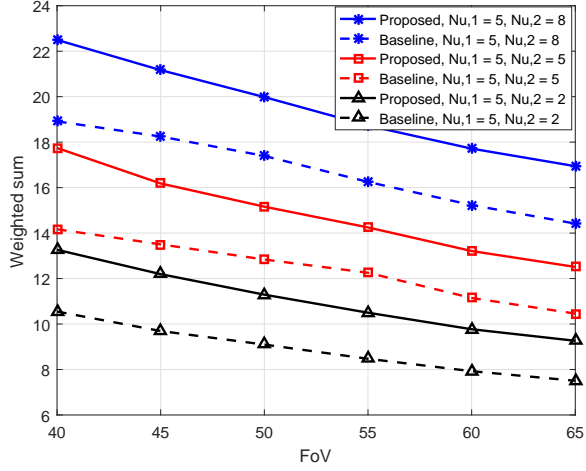}
\caption{The weighted sum function versus users' FoV with different number of EHUs, $\alpha=0.5$.}
\label{FoV_v_WS2}
\end{figure}

Fig. \ref{FoV_v_SR} studies the effect of the users' FoV and the number of users on the sum-rate function. In this figure, we optimize the sum-rate under QoS constraints which can be implemented by setting $\alpha=1$ in the weighted sum function. As expected, decreasing the users' FoV, increasing the IUs, or decreasing the number of EHUs improve the sum-rate as shown in  Fig. \ref{FoV_v_SR}.  The figure also shows that the proposed Algorithm \ref{Algor1} outperforms the proposed baseline approach or the equal DC-bias allocation approach at the different scenarios considered  in the figure, especially when the number of IUs is high.

\begin{figure}[!ht]
\centering
\includegraphics[width=3.5in]{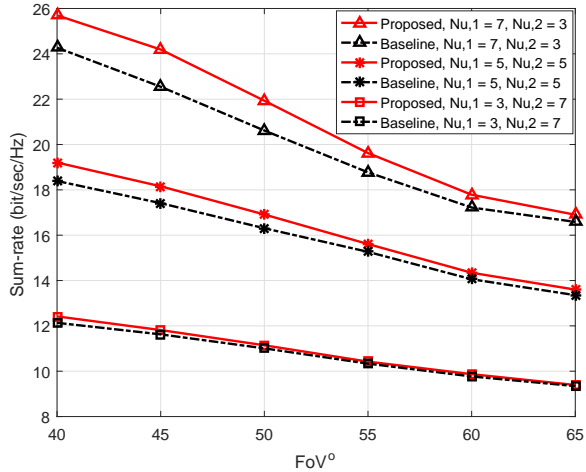}
\caption{The sum-rate versus users' FoV with different number of EHUs and IUs, $\alpha=1$.}
\label{FoV_v_SR}
\end{figure}

\begin{figure}[!ht]
\centering
\includegraphics[width=3.5in]{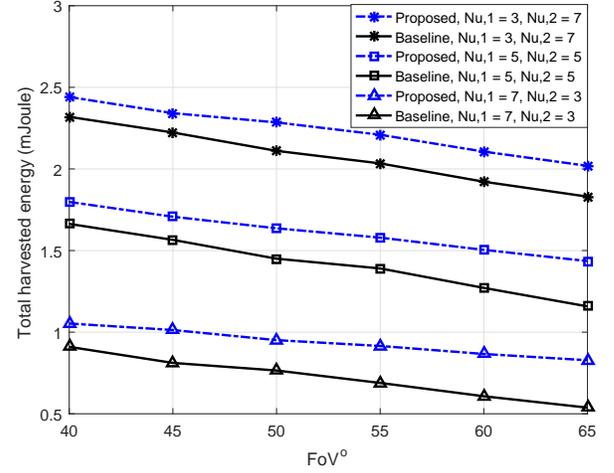}
\caption{The total harvested energy versus users' FoV with different number of EHUs and IUs, $\alpha=0$.}
\label{FoV_v_THE}
\end{figure}

Fig. \ref{FoV_v_THE} studies the effect of the users' FoV and the number of users on the total harvested energy function. In this figure, we use Algorithm \ref{Algor2} instead of Algorithm \ref{Algor1} to solve the optimization problem, which is a special case that can be implemented when  $\alpha=0$ in the weighted sum function. As expected, the figure shows that decreasing the users' FoV, increasing the IUs, or decreasing the number of EHUs lead to increasing  the total harvested energy.  The figure also shows that the proposed Algorithm \ref{Algor2} outperforms the proposed baseline approach (i.e., the equal DC-bias allocation approach) at the different scenarios considered  in the figure.

\begin{figure}[!ht]
\centering
\includegraphics[width=3.5in]{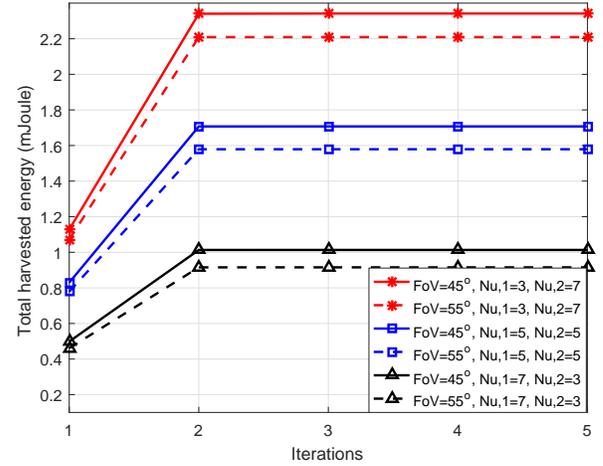}
\caption{The total harvested energy versus number of iterations with different users' FoV and different number of IUs and EHUs, $\alpha=0$.}
\label{EHvI}
\end{figure}

To illustrate the convergence of the iterative algorithm proposed to compensate for the used approximations, Fig. \ref{EHvI} studies the behavior of Algorithm \ref{Algor2} and plots the total harvested energy at all EHUs versus the number of iterations, for two values of the FoV and different numbers of IUs and EHUs. The figure shows the fast convergence of Algorithm \ref{Algor2} for all values of FoV for the different number of users, which further highlight the numerical efficiency of our proposed algorithm.


\begin{figure}[!ht]
\centering
\includegraphics[width=3.5in]{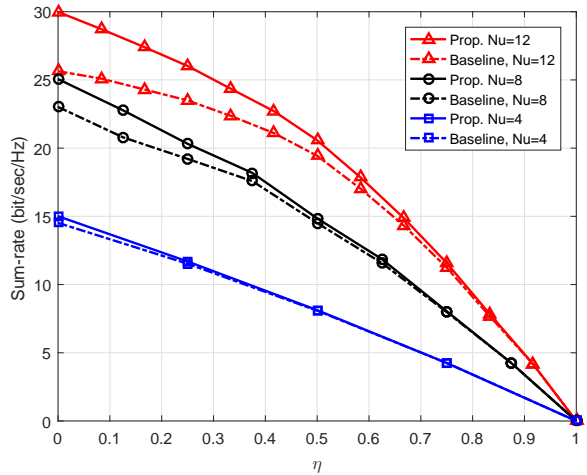}
\caption{The sum-rate versus $\eta$ (the percentage of EHUs out of  total number of users $N_u$), with different total number of users, FoV$=45^o$. }
\label{eta_v_SR2}
\end{figure}

Fig. \ref{eta_v_SR2} plots the sum-rate as a function of the percentage of number of EHUs out of the total number of users, also denoted by $\eta$ (i.e. $\eta =\frac{N_{u,2}}{N_{u,2}+N_{u,1}}$). This figure shows that the sum-rate decreases as $\eta$ increases, because increasing the EHUs or decreasing the IUs lead to decreasing the sum-rate. This figure also shows that increasing the total number of IUs increases the sum-rate but with a slower rate, since the rate achieved by increasing $N_u=4$ to $N_u=8$ is around double the rate achieved by increasing $N_u=8$ to $N_u=12$.  This is because adding one user to the system decreases the assigned power (on average) for the existing users for a given fixed transmit power.

\begin{figure}[!t]
\centering
\includegraphics[width=3.5in]{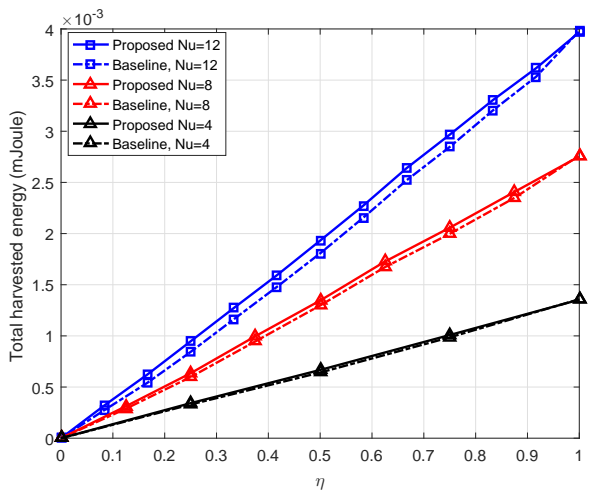}
\caption{The total harvested energy versus $\eta$ (the percentage of EHUs out of  total number of users $N_u$) with different total number of users, FoV$=45^o$.}
\label{eta_v_THE2}
\end{figure}
 Lastly, Fig. \ref{eta_v_THE2} shows that as the fraction of EHUs increases, the total harvested energy increases. As expected, this is mainly due to two main reasons. Firstly, for a fixed number of users $N_u$, as the fraction of EHUs increases, the number of EHUs increases, which adds to the total harvested energy. Secondly, decreasing the number of IUs  leads to decreasing the number of constraints in (\ref{EHMb}), which increases the search space of (\ref{EHMb}); thereby increasing  the objective function. The figure further shows that, if $\eta=1$ (i.e. when all users are EHUs), all the APs operate with a highest DC-bias (i.e. $b_i=I_H,\ \ i=1,\ldots,N_A$), and so both the iterative algorithm and the baseline achieve the same performance. On the other hand, if $\eta=0$ (i.e., when all users are IUs), the total harvested energy becomes zero.
\section{Conclusions}
VLC-based systems are expected to play a major role in achieving the ambitious metrics of next generation indoor wireless networks. This paper considers a VLC setup which considers the coexistence of both IUs and EHUs, and addresses the problem of maximizing a weighted sum of the total harvested energy and the sum-rate by means of properly adjusting the DC-bias values at the coordinating VLC APs and the messags' power vector subject to QoS constraints (minimum required data rate at IUs and minimum required harvested energy at EHUs). The paper solves such a difficult problem using an iterative algorithm by first using an inner convex approximation, and then by properly compensating for the approximation in an outer loop. Simulation results show that an appreciable, balanced performance improvement in both utility functions (the sum-rate and the total harvested energy) can be achieved by jointly optimizing the DC-bias vector and the messages' power vector.


\bibliography{mylib}
\bibliographystyle{IEEEtran}

\end{document}